\newtheorem{theorem}{Theorem}
\newcommand\scalemath[2]{\scalebox{#1}{\mbox{\ensuremath{\displaystyle #2}}}}
\newtheorem{lemma}{Lemma}
\newtheorem{Assumption}{Assumption}
\newtheorem{Definition}{Definition}
\DeclareMathOperator{\EX}{\mathbb{E}}
\def\Algo{SORL }
\title{\LARGE \bf
Safety Optimized Reinforcement Learning via Multi-Objective Policy Optimization
}
\author{Homayoun Honari, Mehran Ghafarian Tamizi, Homayoun Najjaran$^{1}$
\thanks{*This work was supported by Apera AI and Mathematics of Information Technology and Complex Systems (MITACS) under IT16412 Mitacs Accelerate.}
\thanks{$^{1}$ The authors are all affiliated with the Faculty of Engineering and Computer Science,
        University of Victoria, Victoria, BC, Canada
        {\tt\small hmnhonari@uvic.ca, mehranght@uvic.ca, najjaran@uvic.ca}}%
}
\begin{document}

\maketitle
\thispagestyle{empty}
\pagestyle{empty}

\begin{abstract}

Safe reinforcement learning (Safe RL) refers to a class of techniques that aim to prevent RL algorithms from violating constraints in the process of decision-making and exploration during trial and error. In this paper, a novel model-free Safe RL algorithm, formulated based on the multi-objective policy optimization framework is introduced where the policy is optimized towards optimality and safety, simultaneously. 
The optimality is achieved by the environment reward function that is subsequently shaped using a safety critic.
The advantage of the \textit{Safety Optimized RL (SORL)} algorithm compared to the traditional Safe RL algorithms is that it omits the need to constrain the policy search space. This allows SORL to find a natural tradeoff between safety and optimality without compromising the performance in terms of either safety or optimality due to strict search space constraints.
Through our theoretical analysis of SORL, we propose a condition for SORL's converged policy to guarantee safety and then use it to introduce an aggressiveness parameter that allows for fine-tuning the mentioned tradeoff.
The experimental results obtained in seven different robotic environments indicate a considerable reduction in the number of safety violations along with higher, or competitive, policy returns, in comparison to six different state-of-the-art Safe RL methods. The results demonstrate the significant superiority of the proposed SORL algorithm in safety-critical applications.
\end{abstract}

\section{Introduction}
Reinforcement learning (RL) is a class of machine learning methods where an agent learns to make decisions by interacting with an environment to maximize rewards. However, the trial-and-error nature of training RL algorithms makes them challenging to use in safety-critical applications where the execution of some actions might lead to system failure. To tackle this, Safe RL algorithms aim to incorporate safety into the learning process to ensure that the policy learned by the algorithm avoids dangerous states. These algorithms have been applied successfully in various real-world domains such as robotics\cite{garcia2020teaching,li2021reinforcement} and autonomous driving\cite{isele2018safe} showing their great potential to enable the control of real-world systems with a minimized total number of failures.

As reviewed extensively in \cite{garcia2015comprehensive,brunke2022safe,gu2022review}, most of the Safe RL methods utilize the Constrained Markov Decision Process (CMDP) framework. Algorithms under this framework specify a level of safety that the policy must adhere to while exploring unknown states and improving its reward performance. However, a major disadvantage of this class of algorithms is their susceptibility to converge to a suboptimal policy 
due to suboptimal tuning of the safety-related hyperparameters.

To mitigate this, our work presents a novel model-free Safe RL algorithm, named Safety Optimized Reinforcement Learning (SORL), designed to enhance both safety and reward performance of the agent, simultaneously. Unlike conventional methods, we tackle the Safe RL problem as a multi-objective policy optimization problem, which allows us to introduce a reward-shaping technique that encourages the agent to explore the environment safely while striving to achieve better performance. This formulation provides an advantage over other model-free Safe RL approaches by eliminating the need to fine-tune the degree of constraining the policy search space (i.e., $\epsilon_{safe}$). As a result, the algorithm will be able to reach a natural trade-off between performance and safety.
Through our analysis of SORL, we guarantee the safety of its converged policy through a condition.
This condition motivates the introduction of the concept of aggressiveness in our algorithm which provides an intuitive way to tune the hyperparameters of the proposed algorithm.

\begin{figure}[t]
\centering
\captionsetup{justification=centering}
\includegraphics[width=1\columnwidth,height=0.55\linewidth]{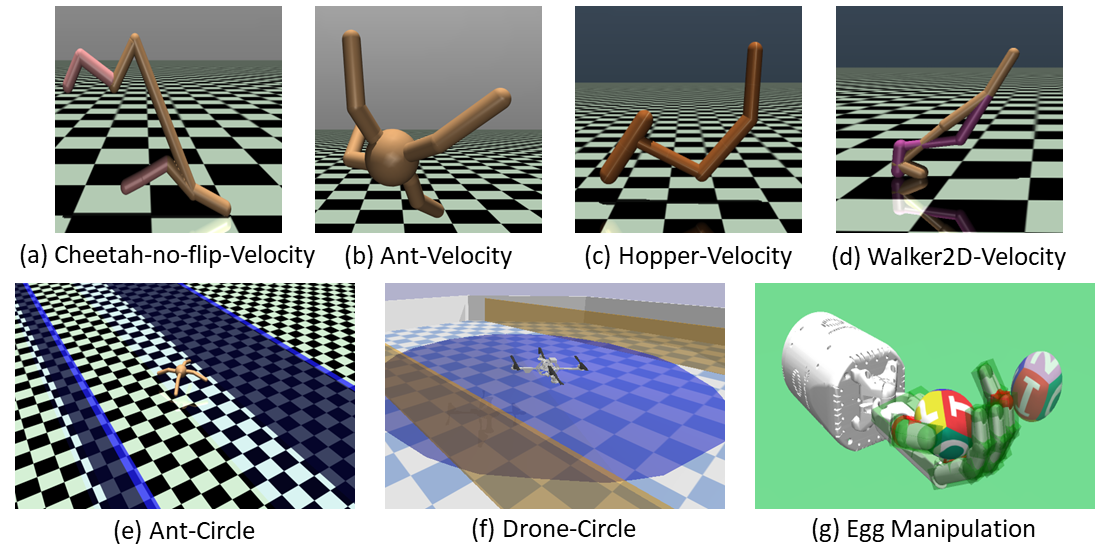}
\caption{Visualization of the simulated safety-concerned robotics environments used to evaluate SORL in the context of three main safety topics of system-level safety (a-d), collision avoidance (e,f) and safe manipulation (g). The top row also showcases some possible constraint violation for the system-level safety benchmarks. 
}
\label{contour}
\vspace{-1.5em}
\end{figure}
Finally, to demonstrate the effectiveness of our proposed algorithm we conduct experiments in seven different safety-concerned simulated robotics problems, which are divided into three main safety topics, namely system-level safety, collision avoidance, and safe manipulation. The tasks are visualized in Fig.~\ref{contour}. The algorithm is compared with six other state-of-the-art model-free Safe RL methods. Our results indicate superior performance in both optimality and safety aspects.
\section{Related work}

Numerous approaches in the literature address safe reinforcement learning. Altman et al.~\cite{altman1999constrained} studied RL algorithms under CMDP framework which aims to train them while satisfying certain constraints.
In this regard, Lagrangian methods~\cite{bertesekas1999nonlinear} are widely employed for efficient CMDP resolution. Shen et al.~\cite{shen2014risk} introduced the risk-sensitive policy optimization (RSPO) algorithm which decreases the LR optimization term to 0 sequentially. Furthermore, Tessler et al.~\cite{tessler2018reward} introduced Reward-constrained policy optimization (RCPO), which optimizes policy and Lagrange multiplier via dual gradient descent.
Furthermore, Zhang et al.~\cite{zhang2020first} proposed the First-Order Policy Optimization (FOCOPS) method, which identifies the optimal update policy through constrained optimization in a nonparametric policy space and subsequently maps it back into the parametric policy space.
Stooke et al.~\cite{stooke2020responsive} incorporated proportional and derivative control into the Lagrange multiplier updates.

Moreover, another approach in the literature to tackle the problem of Safe RL is through modifying unsafe actions locally, meaning they modify the action when it is identified as leading to an unsafe state. Srinivasan et al.~\cite{srinivasan2020learning} uses LR formulation and rejects unsafe policy actions above a safety threshold. Dalal et al.~\cite{dalal2018safe} applies a modification to the reward policy's action using a safety layer that solves its formulation analytically.
Furthermore, Yu et al.~\cite{yu2022towards} proposed SEditor that trains a safety editing policy that modified the selected actions considered as unsafe into safe actions.
Koller et al. \cite{koller2018learning} introduce a learning-based model predictive control (MPC) framework with high probabilities of safety constraint satisfactions which is attained through a Gaussian process statistical model.
Hsu et al.~\cite{hsu2022improving} introduced an unsupervised action planning method that stores the agent's recovery actions for leaving unsafe areas in a dedicated replay buffer, subsequently utilizing it when the agent faces an unsafe state. Safe model-based policy optimization (SMBPO) \cite{thomas2021safe} proactively plans a brief horizon into the future to anticipate and prevents safety violations by applying penalties to unsafe trajectories. Recovery RL \cite{thananjeyan2021recovery} balances exploration and safety through the utilization of either a backup policy, employed for ensuring safety, or MPC to determine the optimal action sequence.

Finally, it is noteworthy that most of the aforementioned methods require the specification of the extent to which the policy is limited, whether using CMDP framework or attempting to detect unsafe actions, which necessitates the careful fine-tuning of the safety threshold.

\section{Preliminaries}
In this section, the basic background concepts and formulations are explained. First, the MDP framework is explained and the safety critic and reward penalty framework are discussed.
\subsection{Markov decision process}
Reinforcement learning is trained under Markov Decision Process (MDP) framework which is presented as $<S, A, P, r, \gamma, \mu>$ and is outlined in \cite{sutton2018reinforcement}. The MDP is composed of a state space $S$, an action space $A$, and a reward function $r: S \times A \times S \mapsto \mathbb{R}$. The transition function $P: S \times A \times S \mapsto [0,1]$ determines the probability $P(s'|s,a)$ of transitioning from state $s$ to $s'$ by executing action $a$. The initial state distribution, $\mu: S \mapsto [0,1]$, and the discount factor, $\gamma \in [0,1)$, are also included. Finally, the policy $\pi: S \mapsto \Delta_{A}$ is the probability distribution over actions, with $\pi(a|s)$ indicating the likelihood of taking action $a$ at state $s$. The value function of a policy $\pi$ for a state-action pair $(s,a)$ and the resulting recursive equation, called the Bellman equation, can be written as:
\begin{equation}
\begin{aligned}
    &Q^{\pi}_{r}(s,a) = \mathbb{E}_{s_t\sim P,a_t\sim\pi}[\sum_{t=0}^\infty \gamma^{t} r(s_t,a_t) |s_0 = s, a_0=a]\\
    &\qquad\qquad=\mathbb{E}_{s'\sim P}[r(s,a)+\gamma V^{\pi}_{r}(s')]
\end{aligned}
\end{equation}
The ultimate objective of an RL algorithm is to maximize the expected discounted cumulative return given the initial state distribution $\mu$:
\begin{equation}
    \pi^*=\underset{\pi \in \Pi}{\text{argmax}} ~J^\pi_r = \underset{\pi \in \Pi}{\text{argmax}} ~\mathbb{E}^{\pi}_{s_0 \sim \mu}[\sum_{t=0}^{\infty}\gamma ^t r^t]
\end{equation}
\subsection{Safety critic}

The safety critic $Q^{\pi}_{safe}$ as described in \cite{srinivasan2020learning}, is based on the safety-aware MDP framework, which is represented as $<S,A,P,r,c,\gamma,\gamma_{safe}>$. 
The safety critic's discount factor is denoted by $\gamma_{safe}$, and the safety signal $c(s)$ is used to determine whether a given state $s$ is safe or not:
\begin{equation}
c(s) = 
\begin{cases}
       1 &\quad \mathrm{if}~ s \in S_{unsafe} \\ 
       0 &\quad \mathrm{otherwise}\\
     \end{cases}
\end{equation}
The main purpose of the safety critic is to estimate the likelihood of a policy failure in the future, based on the expected cumulative discounted probability of failure:

\begin{equation}\label{safebell}
\begin{aligned}
    &Q_{safe}^{\pi}(s,a) = \mathbb{E}_{s_t \sim P, a_t \sim \pi}\big[c(s) + (1 - c(s))\sum_{t = 1}^{\infty} [\gamma_{safe} ^ t c(s_t)]\big] \\
    & = \text{Pr}[c(s) = 1] + \gamma_{safe}\mathbb{E}_{s' \sim P}\big[(1 - c(s))V_{safe}^\pi(s')\big] 
\end{aligned}
\end{equation}
\subsection{Reward penalty framework}\label{rewardpenaltyframework}
Generally, the main purpose of Safe RL algorithms is to identify, and avoid, the set of states that violate the safety constraints, which are refered to as $S_{unsafe}$. To do so, similar to the previous works on Safe RL~\cite{hans2008safe,thomas2021safe}, we can characterize a subset of the state space that are not considered unsafe but will lead to an unsafe state inevitably:
\begin{Definition}
    A state $s\in S$ is considered \textbf{Irrecoverable} if for any sequence of actions $a_0, a_1, a_2, ...$, starting from state $s_0 = s$ and following the transition probability $s_{t+1} \sim P(s_t, a_t)~ \forall t \in \mathbb{N}$, there exists some time step $\bar{t} \in \mathbb{N}$ s.t. $s_{\bar{t}}\in S_{unsafe}$.
\end{Definition}

Naturally, based on the definition, the safe state space $S_{safe}$ encompasses the subset of the state space which are neither categorized as unsafe nor irrecoverable. Correspondingly, an action is considered safe if executing it leads to a safe state. 
Furthermore, we can assume a soon occurrence of safety violation after entering into it an irrecoverable state:

\begin{Assumption}\label{Horizon}
 For any state $s_0\in S_{\mathrm{irrecoverable}}$ and for any sequence of actions $a_0,a_1,\ldots$ starting from $s_0$ and $s_{t+1}\sim P(s_t,a_t) ~\forall t\in\mathbb{N}$, there exists $\bar{t}\in\{1,...,H^*\} \mathrm{~s.t.~} s_{\bar{t}}\in S_{\mathrm{unsafe}}$.
\end{Assumption}

Finally, based on the reward penalty MDP $<S,A,P,\tilde{r},\gamma>$ which is introduced in~\cite{thomas2021safe}, the reward function can be modified as:
\begin{equation}\label{reward}
    \tilde{r}(s,a) = 
     \begin{cases}
       r(s,a) &\quad \text{if } s \notin S_{unsafe}\\
       -C &\quad \text{otherwise} \\ 
     \end{cases}
\end{equation}

where C satisfies the following inequality:
\begin{equation}\label{CBound}
    C > \frac{r_{max} - r_{min}}{\gamma ^ {H^*}} - r_{max}
\end{equation}
The terminal state cost $C \in \mathbb{R}$ is used to penalize the RL agent when unsafe trajectories are executed.

\section{Method}
In this work, a model-free safe reinforcement learning algorithm is proposed which uses safety signals to avoid unsafe regions. In the training process, the safety critic is used to modify the reward function such that exploration in unsafe regions is prevented.

\subsection{Multi-Objective Policy Optimization}
Unlike the conventional Safe RL setting where the algorithm is trained under the Constrained MDP framework, we formulate \Algo in a multi-objective policy optimization setting \cite{abdolmaleki2020distributional,abdolmaleki2021multi}. To this end, we propose \textit{Safety-aware reward penalty MDP} $<S,A,P,\tilde{r},c,\gamma,\gamma_{safe}>$. Under this setting, in multi-objective policy optimization, there are multiple (often conflicting) objectives and the aim is to optimize the objectives simultaneously. For this purpose, the policy performance is defined as a 2-dimensional vector:

\begin{equation}\label{perf}
J(\pi)= 
\begin{bmatrix} J_r(\pi) \\  
J_c(\pi) \end{bmatrix}=
\begin{bmatrix} \EX_{\tau\sim\pi}[\sum_{t=0}^{\infty} \gamma^t r_t] \\  
\EX_{\tau\sim\pi}[\sum_{t=0}^{\infty} \gamma^t c_t] \end{bmatrix}
\end{equation}

Furthermore, according to the Multi-Objective optimization literature, the dominance of a policy $\pi$ relative to $\pi'$ can be defined as:

\begin{equation}\label{dominn}
\mathrm{if} \;\;\;  \begin{cases} J_r(\pi)\geq J_r(\pi') \\\mathbf{\qquad and}\\ J_c(\pi)\leq J_c(\pi') \end{cases}
\Rightarrow
J(\pi)\succcurlyeq J(\pi')
\end{equation}

The field of multi-objective deep reinforcement learning is an active area of research that encompasses various methodologies aimed at optimizing the expected return of a set of potentially conflicting reward functions \cite{roijers2013survey}. One widely-used approach for addressing this challenge involves scalarizing the reward vector using a scalarization function, which enables the optimization of the scalarized function and, subsequently, the optimization of all rewards. In the context of SORL, the objective is to devise a reward-shaping scheme ($\hat{r}$) that optimizes both performance functions by optimizing a single policy:
\begin{equation}\label{domin}
    J_{\hat{r}}(\pi)\geq J_{\hat{r}}(\pi')\Rightarrow\pi\succcurlyeq\pi'
\end{equation}

\subsection{Safety Optimized Reward Shaping}
To address Eq.~\ref{domin}, the notion of the reward function as a vector consisting of the actual reward function $r$ and the safety signal function $c$ seems plausible. However, it is not possible to use the safety signal directly in the scalarization function as it is a sparse function and its immediate value does not bear much significance. To this end, we propose the augmented reward function, based on the safety estimate of the safety critic, defined as:

\begin{equation}\label{rtilde}
\begin{aligned}
    &\tilde{r} (s_t,a_t)= 
    \begin{cases}
    [1-\lambda Q_{safe}^\pi(s_t,a_t)]r(s_t,a_t),& \text{if } r(s_t,a_t)\geq 0\\
     \lambda Q_{safe}^\pi(s_t,a_t)r(s_t,a_t),              & \text{otherwise}
    \end{cases}
\end{aligned}
\end{equation}

where $\lambda>0$ is defined as the safety critic significance factor. Finally, under the safety-aware reward penalty MDP framework, the final reward shaping will be as follows:

\begin{equation}\label{MCreward}
    \hat{r} (s_t,a_t)=
    \begin{cases}
        \Tilde{r} (s_t,a_t),& \text{if } s_{t+1}\in S_{safe} \\
        -C,& \text{otherwise}
    \end{cases}
\end{equation}

\subsection{Safety Guarantee}\label{guarantee}
The aim of this section is to guarantee the safety of the converged policy when Eq.~\ref{MCreward} reward shaping scheme is used which will help us during the hyperparameter tuning phase.
To prove the theorem, the following assumptions are used:
\begin{Assumption}\label{reward_bound}
    The reward function $r$ is bounded in the range $[r_{min},r_{max}]$ where $r_{min}<0$ and $r_{max}>0$.
\end{Assumption}
\begin{Assumption}\label{deterministic}
    The environment is deterministic in terms of the safety violations. In other words, for any state $s\in S$, $\mathrm{Pr}[c(s) = 1]$ is either 0 or 1.
\end{Assumption}
In order to provide a safety guarantee we must first study the irrecoverable states discussed in section \ref{rewardpenaltyframework}. Using Assumption \ref{Horizon} allows for further categorizing the irrecoverable states into levels of unsafety. 

\begin{lemma}\label{irreclemma}
    In an environment where Assumption \ref{Horizon} holds, for any trajectory $\tau=\{(s_0,a_0),...,(s_{|\tau|},a_{|\tau|})\}$ where $s_0\in S_{irrecoverable}$, $s_t\sim P(s_{t-1},a_{t-1})$, and $s_{|\tau+1|}\sim P(s_{|\tau|},a_{|\tau|})\in S_{unsafe}$, and for any $t\in\{1,...,{|\tau|}\}$ we have:
    \begin{equation}
        Q_{safe}^\pi(s_t,a_t)\geq {(\gamma_{safe})}^{H^*-t}
    \end{equation}
\end{lemma}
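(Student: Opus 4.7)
The plan is to prove the bound by backward induction on $t$, running from $t=|\tau|$ down to $t=1$ via the Bellman equation in Eq.~\ref{safebell}. A useful preliminary observation is that Assumption~\ref{Horizon} applied to $s_0\in S_{\mathrm{irrecoverable}}$ along the trajectory $\tau$ forces $|\tau|+1\le H^*$, so every exponent $H^*-t$ appearing in the claim is at least $1$ for $t\le |\tau|$. Moreover, since $S_{\mathrm{irrecoverable}}$ is disjoint from $S_{\mathrm{unsafe}}$, we have $c(s_t)=0$ at every trajectory index, so the Bellman recursion collapses to $Q^\pi_{safe}(s_t,a_t)=\gamma_{safe}\,\EX_{s'\sim P(s_t,a_t)}[V^\pi_{safe}(s')]$.

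For the base case $t=|\tau|$, I would invoke Assumption~\ref{deterministic}: since one realization $s_{|\tau|+1}\sim P(s_{|\tau|},a_{|\tau|})$ lies in $S_{\mathrm{unsafe}}$, the safety-determinism assumption forces every $s'\sim P(s_{|\tau|},a_{|\tau|})$ to be unsafe, and Eq.~\ref{safebell} evaluated at an unsafe state gives $V^\pi_{safe}(s')=1$. The simplified recursion then yields $Q^\pi_{safe}(s_{|\tau|},a_{|\tau|})=\gamma_{safe}\ge\gamma_{safe}^{H^*-|\tau|}$ by the preliminary observation.

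For the inductive step, assuming the bound at $t+1$, I would apply the simplified Bellman recursion at $(s_t,a_t)$ and first argue that every $s'\sim P(s_t,a_t)$ is itself irrecoverable: Assumption~\ref{deterministic} combined with $s_{t+1}\notin S_{\mathrm{unsafe}}$ forces each such $s'$ to be non-unsafe, and if some $s'$ were safe it would admit an action sequence avoiding $S_{\mathrm{unsafe}}$ forever, contradicting Assumption~\ref{Horizon} for $s_0$ via the prefix $a_0,\ldots,a_t$ extended by that sequence. A pointwise lower bound $V^\pi_{safe}(s')\ge\gamma_{safe}^{H^*-t-1}$ then delivers the desired $Q^\pi_{safe}(s_t,a_t)\ge\gamma_{safe}^{H^*-t}$.

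The main obstacle will be precisely this promotion from the trajectory-specific hypothesis to a pointwise bound: the stated hypothesis controls $Q^\pi_{safe}$ only at $(s_{t+1},a_{t+1})$, whereas $V^\pi_{safe}(s')$ averages $Q^\pi_{safe}(s',a')$ over every $a'\in\mathrm{supp}\,\pi(\cdot\mid s')$ and the outer expectation ranges over every reachable $s'$. The universal bound supplied by Assumption~\ref{Horizon} alone is only $\gamma_{safe}^{H^*}$, which is too weak to carry the trajectory-dependent exponent $H^*-t-1$. I would close this gap by reading Assumption~\ref{deterministic} in the stronger transition-deterministic sense the rest of the section appears to use, collapsing the outer expectation to the single term $V^\pi_{safe}(s_{t+1})$; the chain $V^\pi_{safe}(s_{t+1})\ge Q^\pi_{safe}(s_{t+1},a_{t+1})$ (exact for a deterministic $\pi$, weighted by $\pi(a_{t+1}\mid s_{t+1})$ otherwise) together with the inductive hypothesis then closes the induction.
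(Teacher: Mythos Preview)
Your backward induction differs from the paper's route, and the obstacle you flag in the last paragraph is real and is not closed by your proposed fix. The chain $V^\pi_{safe}(s_{t+1})\ge Q^\pi_{safe}(s_{t+1},a_{t+1})$ is simply false for a general stochastic $\pi$ (the value is an \emph{average} of $Q$ over actions, not a lower bound at any particular action), so even after reading Assumption~\ref{deterministic} as transition-determinism your inductive step does not go through unless you additionally assume the policy is deterministic.

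The paper avoids induction entirely by using the concatenation idea \emph{quantitatively} rather than only qualitatively. You already invoke concatenation to show each successor $s'$ of $(s_t,a_t)$ is irrecoverable, and then remark that Assumption~\ref{Horizon} ``alone'' yields only the weak exponent $H^*$. But the same concatenation gives more: for any trajectory $\tau'$ starting at $s_t$ and first hitting $S_{\mathrm{unsafe}}$ after $|\tau'|$ steps, prefixing by $\tau(s_0{:}s_t)$ produces a trajectory out of the irrecoverable state $s_0$ of total length $t+|\tau'|$, and Assumption~\ref{Horizon} forces this to be at most $H^*$, hence $|\tau'|\le H^*-t$. This holds for \emph{every} action sequence out of $s_t$, so under any policy the first unsafe time from $(s_t,a_t)$ is at most $H^*-t$; Eq.~\ref{safebell} together with Assumption~\ref{deterministic} then gives $Q^\pi_{safe}(s_t,a_t)\ge\gamma_{safe}^{H^*-t}$ directly. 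Had you applied this sharper horizon bound pointwise at every $s'$ (with exponent $H^*-t-1$), your induction would also close without any extra determinism hypotheses; the paper's argument is essentially that observation with the induction unrolled.
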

\proof
Recall that from Assumption \ref{Horizon} we know the length of the trajectory is ${|\tau|}\leq H^*$. Moreover, at any time step $t$ and its corresponding state $s_t$, consider the space of all the trajectories that start from $s_t$: $\mathrm{T}_t=\{\tau^{\prime}:(s^{\prime}_0=s_t,a^{\prime}_0),\ldots,(s^{\prime}_{|\tau^{\prime}|},a^{\prime}_{|\tau^{\prime}|}),s^{\prime}_{|\tau^{\prime}+1|}\sim P(s_{|\tau^{\prime}|},a_{|\tau^{\prime}|})\in S_{unsafe}\}$.
Suppose there exists a trajectory $\tau^{\prime}\in\mathrm{T}_t$ with $|\tau^{\prime}|>H^*-t$. Consequently, the concatenation $\tau(s_0:s_t)\cup\tau^{\prime}$ will have a length greater than $H^*$ which will violate Assumption \ref{Horizon} since there will then exist a trajectory from the irrecoverable state $s_0$ with a length greater than~$H^*$. \\
Therefore, based on the original trajectory $\tau$, we can conclude that $\underset{\tau^{\prime}\in\mathrm{T}_t}{\max}|\tau^{\prime}|\leq H^*-t$. Finally, by the definition of the safety critic in Eq.~\ref{safebell} and using Assumption \ref{deterministic}, the lower bound can be achieved.
\endproof
It is noteworthy that to establish safety bounds for the safety critic's evaluation in safe states, we need to make assumptions about the policy's behavior and its interactions with the environment because in some cases, the policy may choose unsafe actions even in safe states, making the safety critic's output an expectation that includes both safe and unsafe actions. Therefore, to maintain a broad analysis, we avoid making such limiting assumptions.
\begin{theorem}\label{the}
Under the safety-aware reward penalty MDP framework, let the following condition for safety critic significance factor $\lambda$ hold:
\begin{equation}\label{theoremlambda}
\begin{aligned}
    &\frac{r_{max}}{1-\gamma}-\frac{r_{max}+C}{1-\gamma}\gamma^{|\tau_{uwc}^{*}|}\\
    &<(\frac{\gamma_{safe}^{H^*}r_{max}}{1-\frac{\gamma}{\gamma_{safe}}}-\frac{\gamma_{safe}^{H^*}r_{max}}{1-\frac{\gamma}{\gamma_{safe}}}(\frac{\gamma}{\gamma_{safe}})^{|\tau_{uwc}^{*}|}+\frac{\gamma_{safe}r_{min}}{1-\gamma})\lambda
\end{aligned}
\end{equation}
where C and $|\tau_{uwc}^{*}|$ follow Eq.~\ref{CBound} and Eq.~\ref{optimaltraj}, respectively.
Consequently, for any state $s$ we have: $\hat{Q}^*(s,a)>\hat{Q}^*(s,a')$, where action $a$ is safe, $a'$ is unsafe, and $\hat{Q}^*$ is the $Q^*$ value-function following Eq.~\ref{MCreward} reward-shaping. \end{theorem}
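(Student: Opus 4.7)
My plan is to prove $\hat{Q}^*(s,a)>\hat{Q}^*(s,a')$ by sandwiching the two Q-values: derive an upper bound on $\hat{Q}^*(s,a')$ that reproduces the left-hand side of Eq.~\ref{theoremlambda}, derive a lower bound on $\hat{Q}^*(s,a)$ that reproduces the right-hand side, and then invoke the hypothesis of the theorem to conclude. The underlying intuition is that the shaping in Eq.~\ref{rtilde} depresses the return of any trajectory that traverses states with large $Q_{safe}^{\pi}$, while the $-C$ penalty in Eq.~\ref{MCreward} forces an unsafe trajectory to pay an overwhelming terminal cost in light of Eq.~\ref{CBound}.

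For the upper bound, I would argue that since $a'$ transitions into $S_{unsafe}\cup S_{irrecoverable}$, Assumption~\ref{Horizon} forces the subsequent trajectory to reach $S_{unsafe}$ within at most $H^*$ steps; $|\tau^*_{uwc}|$ is the length of the best (longest-surviving, reward-maximizing) such unsafe trajectory. Along its first $|\tau^*_{uwc}|$ steps each shaped reward is at most $r_{max}$, because $Q_{safe}^{\pi}\geq 0$ makes the multiplier $(1-\lambda Q_{safe}^{\pi})$ only shrink positive rewards. From step $|\tau^*_{uwc}|$ onwards the reward collapses to $-C$, producing the geometric tail $-C\gamma^{|\tau^*_{uwc}|}/(1-\gamma)$, and the two contributions combine into $\tfrac{r_{max}}{1-\gamma}-\tfrac{r_{max}+C}{1-\gamma}\gamma^{|\tau^*_{uwc}|}$, which is exactly the LHS of Eq.~\ref{theoremlambda}.

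For the lower bound on $\hat{Q}^*(s,a)$, I would invoke Lemma~\ref{irreclemma} so that $Q_{safe}^{\pi}(s_t,a_t)\geq\gamma_{safe}^{H^*-t}$ on any residual irrecoverable subtrajectory, paired with Assumption~\ref{reward_bound}. Treating both branches of Eq.~\ref{rtilde} in their worst-case form gives, per step, a contribution of $-\lambda r_{max}\gamma_{safe}^{H^*-t}$ relative to $r_{max}$ on the positive-reward branch and $\lambda r_{min}\gamma_{safe}^{H^*-t}$ on the negative-reward branch. Summing the discounted geometric series in $(\gamma/\gamma_{safe})^t$ over the first $|\tau^*_{uwc}|$ steps produces the $\gamma_{safe}^{H^*}r_{max}/(1-\gamma/\gamma_{safe})$ terms and their $(\gamma/\gamma_{safe})^{|\tau^*_{uwc}|}$-truncation, while the remaining infinite tail yields $\gamma_{safe}r_{min}/(1-\gamma)$; assembling these pieces gives precisely the RHS of Eq.~\ref{theoremlambda}. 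Chaining the two bounds then delivers the strict inequality $\hat{Q}^*(s,a)>\hat{Q}^*(s,a')$.

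The hard part will be the lower bound for the safe action, since the paragraph preceding the theorem deliberately declines to impose any regularity on the optimal policy's behavior in safe states. The argument must therefore accommodate trajectories that temporarily brush against irrecoverable regions where $Q_{safe}^{\pi}$ is non-negligible and $r$ is close to $r_{min}$. Balancing the two branches of $\tilde{r}$ simultaneously and keeping the geometric sums in $\gamma/\gamma_{safe}$ convergent is the delicate step, because a loose bound here would inflate the required $\lambda$ and could render Eq.~\ref{theoremlambda} vacuous precisely in the regimes where the safety guarantee is supposed to be useful.
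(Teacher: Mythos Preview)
Your sandwiching strategy is the right scaffold, but you have attributed the $\gamma_{safe}^{H^*}r_{max}$ terms to the wrong side of the sandwich, and this is where the argument breaks.

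In the paper, Lemma~\ref{irreclemma} is used on the \emph{unsafe} branch, not the safe one. After taking $a'$, the trajectory lives in $S_{\text{irrecoverable}}$, so Lemma~\ref{irreclemma} legitimately gives $Q_{safe}^\pi(s_t,a_t)\ge\gamma_{safe}^{\,H^*-t}$ along that path. Plugging this into the positive-reward branch of $\tilde r$ yields the tighter upper bound
\[
R_{uwc}(x)=\Bigl(\tfrac{r_{max}}{1-\gamma}-\tfrac{\lambda\gamma_{safe}^{H^*}r_{max}}{1-\gamma/\gamma_{safe}}\Bigr)-\tfrac{r_{max}+C}{1-\gamma}\gamma^{x}+\tfrac{\lambda\gamma_{safe}^{H^*}r_{max}}{1-\gamma/\gamma_{safe}}\Bigl(\tfrac{\gamma}{\gamma_{safe}}\Bigr)^{x},
\]
and $|\tau_{uwc}^*|$ is \emph{defined} (Eq.~\ref{optimaltraj}) as the maximizer of this expression over $x\in[1,H^*]$, found by differentiation, not as ``the longest-surviving unsafe trajectory'' in any intrinsic sense. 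The safe-side lower bound in the paper is comparatively trivial: along a safe trajectory one only has $Q_{safe}^\pi\le\gamma_{safe}$ (from the Bellman form in Eq.~\ref{safebell} and Assumption~\ref{deterministic}), giving $\hat Q^*(s,a)\ge \tfrac{\lambda\gamma_{safe}r_{min}}{1-\gamma}$. The theorem's inequality~\eqref{theoremlambda} is then just $\tfrac{\lambda\gamma_{safe}r_{min}}{1-\gamma}>R_{uwc}(|\tau_{uwc}^*|)$ rearranged so that every $\lambda$-dependent term sits on the right; that is why the $\gamma_{safe}^{H^*}r_{max}$ pieces appear on the RHS even though they originate from bounding the unsafe return.

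Your plan instead drops Lemma~\ref{irreclemma} from the unsafe upper bound (using only $\tilde r\le r_{max}$), and then tries to recover the $\gamma_{safe}^{H^*}r_{max}$ terms in the safe lower bound by invoking Lemma~\ref{irreclemma} on ``residual irrecoverable subtrajectories'' of the safe path. That step cannot be justified: the safe action leads, by definition, to $S_{safe}$, so Lemma~\ref{irreclemma} (which requires $s_0\in S_{\text{irrecoverable}}$) simply does not apply, and there is no lower bound of the form $Q_{safe}^\pi\ge\gamma_{safe}^{\,H^*-t}$ available on that branch. Without it, your claimed lower bound for $\hat Q^*(s,a)$ collapses to the paper's $\tfrac{\lambda\gamma_{safe}r_{min}}{1-\gamma}$, and paired with your looser unsafe upper bound you would need a strictly stronger hypothesis than Eq.~\eqref{theoremlambda} to close the gap. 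The fix is to move Lemma~\ref{irreclemma} to the unsafe side and treat $|\tau_{uwc}^*|$ as the calculus-derived maximizer of $R_{uwc}$.
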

\begin{proof} By Assumption~\ref{Horizon}, if $a'$ is unsafe, it is going to lead to an unsafe state in at most $H^*$ steps. Therefore, by Assumption \ref{reward_bound}, the maximum discounted return in the worst-case scenario can be expressed as a function of the length of the trajectory before reaching the unsafe state:
\begin{equation}\label{max_return}
    \begin{aligned}
    &R^{\pi}_{wc}(|\tau|)=\\
    &\qquad\sum_{t=0}^{|\tau|-1}(\gamma^t\big[1-\lambda Q_{safe}^\pi(s_t,a_t)\big]r_{max})+\sum_{t=|\tau|}^\infty(\gamma^t(-C))
    \end{aligned}
\end{equation}
While it is possible to ignore the safety critic term and upper bound the equation, we provide a tighter upper bound in our case. We first use Lemma \ref{irreclemma} to upper bound the function (in the following we use the notation $x$ to indicate the variability of $|\tau|$):
\begin{equation}
    \begin{aligned}
    &R_{wc}^\pi(x)\leq\sum_{t=0}^{x-1}(\gamma^t\big[1-\lambda \gamma_{safe}^{H^*-t}\big]r_{max})+\sum_{t=x}^\infty(\gamma^t(-C))
    \\ &=(\frac{r_{max}}{1-\gamma}-\frac{\lambda\gamma_{safe}^{H^*}r_{max}}{1-\frac{\gamma}{\gamma_{safe}}})-\frac{r_{max}+C}{1-\gamma}\gamma^x
    \\&\qquad\qquad\qquad\qquad+\frac{\lambda\gamma_{safe}^{H^*}r_{max}}{1-\frac{\gamma}{\gamma_{safe}}}(\frac{\gamma}{\gamma_{safe}})^x=R_{uwc}(x)
    \end{aligned}
\end{equation}
To find the maximum unsafe return in the domain $x\in [1, H^*]$, we take the derivative with respect to $x$ and set it to zero: $\partial R_{uwc}^{\pi}(x) / \partial x = 0$. By solving this derivative and ensuring that the trajectory length remains within the range $|\tau_{uwc}^{}| \in [1, H^*]$, we can determine the trajectory length with the highest return as:
\begin{equation}\label{optimaltraj}
    |\tau_{uwc}^{*}|=
    \begin{cases}
    \scalemath{0.7}{\left\lfloor \frac{\mathlarger{\ln}({\dfrac{\lambda\gamma_{safe}^{H^*}r_{max}}{1-\dfrac{\gamma}{\gamma_{safe}}}}\mathlarger{\ln}{\dfrac{\gamma}{\gamma_{safe}}})-\mathlarger{\ln}({\dfrac{r_{max}+C}{1-\gamma}}\mathlarger{\ln}{\gamma})}{\mathlarger{\ln}{\mathlarger{\gamma_{safe}}}}\right\rfloor} & \mathrm{if }\in[1,H^*]\\
    \qquad\quad\underset{|\tau|\in\{1,H^*\}}{\mathrm{argmax}}R_{uwc}(|\tau|) & \mathrm{otherwise}
    \end{cases}
\end{equation}
Hence, Eq.~\ref{max_return} can be upper bounded as:
\begin{equation}
    R_{wc}^\pi(|\tau|)\leq R_{uwc}(|\tau_{uwc}^{*}|)
\end{equation}
Furthermore, executing the safe action $a$ leads to a safe state where a safe trajectory can be generated which does not encounter a safety violation. The discounted return with the minimum reward (Assumption~\ref{reward_bound}) can be lower bounded as:
\begin{equation}\label{rmin}
\begin{aligned}
    &\sum_{t=0}^\infty(\gamma^t\lambda Q_{safe}^\pi(s_t,a_t)r_{min})=
    \lambda r_{min}\sum_{t=0}^\infty(\gamma^t Q_{safe}^\pi(s_t,a_t))\\
    &\qquad\qquad\quad\qquad\geq \lambda r_{min}\gamma_{safe}\sum_{t=0}^\infty(\gamma^t)= \frac{\lambda\gamma_{safe} r_{min}}{1-\gamma}
\end{aligned}
\end{equation}
To establish the inequality, we leverage the observation that $Q_{safe}^\pi(s_t, a_t) \leq \gamma_{safe}$. This is based on the
fact that for any timestep $t$, execution of the state-action pair $(s_{t-1},a_{t-1})$ leads to the safe state $s_{t}$.
Hence, Assumption~\ref{deterministic} ensures us that $\mathrm{Pr}[c(s_t)=1]=0$; consequently, the right hand side equation in Eq.~\ref{safebell} can be simplified as $\gamma_{safe}\mathbb{E}_{s' \sim P}[V_{safe}^\pi(s')]$. Finally, because the expectation operator outputs a value within the range of zero and one, the observation is justified. \\
Therefore, since the discounted return of staying safe forever must always be higher than a trajectory that has a safety violation, it suffices that:
\begin{equation}\label{case1ineq}
    \begin{aligned}
        \Delta=\frac{\lambda\gamma_{safe} r_{min}}{1-\gamma}-R_{uwc}(|\tau_{uwc}^{*}|)>0
    \end{aligned}
\end{equation}
Rearranging Eq.~\ref{case1ineq} gives us the condition.
\end{proof}

It is possible to derive bounds similar to Theorem \ref{the} with other variations of Assumption \ref{reward_bound}. Moreover, intuitively, the value of $\Delta$ in Eq.~\ref{case1ineq} determines the degree of aggressiveness of the algorithm. The closer this value is to zero, the more the algorithm prioritizes optimality and performance over its failure rate. Hence, for each task, we define the level of aggressiveness in SORL by specifying the value of $\Delta$ and fine-tuning $\lambda$ to align it as closely as possible with the specified value.

\begin{algorithm}[t]
\caption{Safety Optimized RL}\label{SORL}
\begin{algorithmic}[1]
\Require Safety critic significance factor $\lambda$ and $\Delta$, horizon $H^*$
\State \textbf{Initialize} policy $\pi_\theta$, critic $Q_{\phi_1}, Q_{\phi_2}$, replay buffer $\mathcal{D}$,
safety critic ${Q}_{safe}^{\psi_1}, {Q}_{safe}^{\psi_2}$, replay buffer $D_{safe}$
\For{$e=1,...,E_{max}$}
    \State $s_1 \gets env.reset()$
    \For{$t=1,...,T_{max}$}
        \State $a_t \sim \pi_{\theta}(.|s_t)$
        \State $\hat{c}_t \gets \max\{{Q}_{safe}^{\psi_1}(a_t|s_t), {Q}_{safe}^{\psi_2}(a_t|s_t)\}$
        \State $s_{t+1}, r_t, c_t, done \gets env.step(a_t)$
        \State compute empirical $r_{min},r_{max}$ and update $C$
        \State solve and update $\lambda$ (Eq.~\ref{theoremlambda})
        \If{$c_t==0$}
            \If{$r_t>0$} $\hat{r}_t\gets \big[1-\lambda\hat{c}_t\big]r_t$
            \Else {} $\hat{r}_t\gets \lambda\hat{c}_tr_t$
            \EndIf
        \Else
            \State $\hat{r}_t\gets-C$
            \State $\mathcal{D}_{safe}\gets \mathcal{D}_{safe}\cup(s_t,a_t,c_t,\hat{r}_t,s_{t+1})$
        \EndIf
        \State $\mathcal{D}\gets \mathcal{D}\cup(s_t,a_t,c_t,\hat{r}_t,s_{t+1})$
        \State train $\pi_\theta, Q_{\phi_1}, Q_{\phi_2}$ on $\mathcal{D}$
        \openup 1mm \State train ${Q}_{safe}^{\psi_1}, {Q}_{safe}^{\psi_2}$ on $\mathcal{D}\cup \mathcal{D}_{safe}$
        \If{done} Break
        \EndIf
    \EndFor
\EndFor
\end{algorithmic}
\end{algorithm}

\subsection{Safety Optimized Reinforcement Learning Algorithm}\label{algosec}
The training process of the SORL algorithm is presented in Algorithm \ref{SORL}. The proposed algorithm can be built on top of any Model-Free RL algorithm and uses two safety critic networks to estimate the cumulative discounted probability of failure. Two replay buffers are utilized, one for storing all the transitions and the other one for storing unsafe transitions. In the training process, the range of the reward function is computed empirically and, as discussed in Section \ref{guarantee}, we update the value of $\lambda$ to satisfy Eq.~\ref{theoremlambda} and the predefined $\Delta$ value in Eq.~\ref{case1ineq}. It should be noted that in the case where $|\tau_{uwc}^{*}|\in(1,H^*)$, $\Delta$ becomes a non-linear equality which makes it challenging to derive a closed-form solution. Therefore, in practice, given an initial value for $\lambda$, we find a solution in the locality of it that satisfies the conditions.

\begin{figure*}[t]
\centering
\captionsetup{justification=centering}
\includegraphics[width=2.1\columnwidth,height=0.4\linewidth]{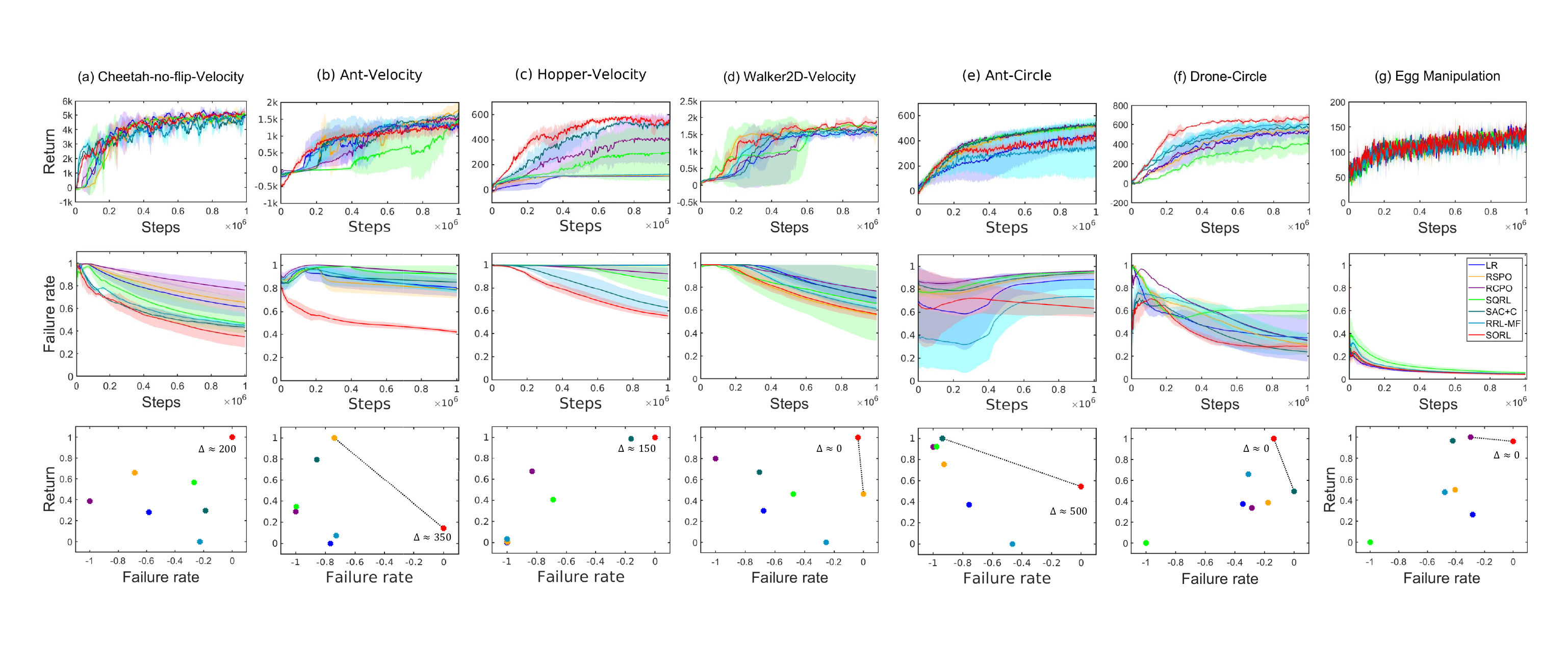}

\caption{
Benchmark results of SORL compared with six other Safe RL algorithms. \textbf{(Top row):} Return values achieved during the training phase (higher is better). \textbf{(Middle row):} Episodic failure rates suffered during the training phase (lower is better). \textbf{(Bottom row):} Pareto optimality plot corresponding to the return and failure rate values (closer to the top right corner is better). They also display the specific value of $\Delta$ employed for executing SORL. For easier comparison, the return and failure rate values are normalized and the failure rate is scaled to lie within -1 and 0. The Pareto optimality solutions are highlighted through the dotted line. 
}
\label{viol}
\vspace{-2.em}
\end{figure*}

\section{Experiment}
In the following section, the performance of SORL is studied. Particularly, we aim to investigate two questions:
\begin{itemize}
    \item How does the safety formulation performs compared with the other model-free off-policy Safe RL algorithms?
    \item How does the value of $\Delta$ in Eq.~\ref{case1ineq} affect the performance of SORL?
\end{itemize}
\subsection{Benchmarks and Comparison Methods}
In order to evaluate the level of safety that the model can achieve, we execute it on three safety-concerned categories of environments:
\begin{itemize}
    \item \textbf{System-level safety:} RL algorithms are often used to optimally control robots while adhering to the system limits.
    We assess our proposed model using four MuJoco environments: \textit{Cheetah-no-flip-Velocity, Ant-Velocity, Hopper-Velocity,} and \textit{Walker2D-Velocity}. In these environments, the agent must learn to move faster in the x-direction while avoiding actions that cause the robot to fall and fail. Additionally, safety violations occur if the robot exceeds a certain velocity. We obtained the codebase for Hopper-Velocity, Walker2D-Velocity, and Ant-Velocity from~\cite{ji2024safety}. For Cheetah-no-flip-Velocity, the base environment was adopted from~\cite{thomas2021safe} and the Velocity constraint was added to it.
    \item \textbf{Collision Avoidance:} Besides the inherent robot limits, additional constraints from the environment can impact the algorithm. Collision avoidance is one such constraint, where the controller aims to control the robot while preventing collisions with obstacles. We evaluate our algorithms using Ant-Circle \cite{achiam2017constrained} and Drone-Circle (from Bullet Safety Gym codebase~\cite{Gronauer2022BulletSafetyGym}) environments. These assessments involve controlling robots to move in circular paths while staying within a safety region smaller than the circle's radius.
    \item \textbf{Safe Manipulation:} Finally, one of the important applications of Safe RL is safe manipulation. To this end, we adopt a modified version of the in-hand object manipulation from Gymnasium Robotics \cite{gymnasium_robotics2023github} which uses a dexterous hand to manipulate an egg to achieve a target pose. In this task, if the hand exerts a normal force more than a threshold (20 N), the egg will get crushed and the agent will fail.
\end{itemize}
The episode ends whenever a safety violation has been incurred. Moreover, in all the environments, the alive bonus has been eliminated to evaluate the performance of the safety algorithms in situations where the original reward shaping does not explicitly encode safety.

Six model-free Safe RL algorithms are used to showcase the performance of SORL.
The comparison algorithms include Lagrangian Relaxation (LR), Safety Q-Functions for RL (SQRL)~\cite{srinivasan2020learning}, Model-Free Recovery RL (RRL-MF)~\cite{thananjeyan2021recovery}, Risk Sensitive Policy Optimization (RSPO)~\cite{shen2014risk}, and Reward Constrained Policy Optimization (RCPO)~\cite{tessler2018reward}. Finally, to study the safety performance of SORL reward shaping, SAC+C is executed which uses Eq.~\ref{reward} reward scheme.
\subsection{Implementation settings}
The codebase for the comparison methods are adopted from~\cite{thananjeyan2021recovery} codebase, and, to have a fair comparison between the algorithms and evaluate their safety, pretraining is disabled for all of the algorithms. To this end, SORL and the comparison methods are built on top of the Soft Actor-Critic algorithm~\cite{haarnoja2018soft}, and, for fair comparison, the general and common hyperparameters of all the algorithms are kept the same. In addition to that, with the help from the problem-specific hyperparamter settings discussed in~\cite{thomas2021safe}, we tune the parameters of the comparison algorithms for each benchmark problem. Moreover, for the hyperparameters relating to SORL, we set $H^*=10$ for all the cases and tune the proposed algorithm based on the value of $\Delta$. For each task, the value of $\Delta$ used to execute SORL is shown in Fig.~\ref{viol}. During our experimentation of various $\Delta$ values within the specified environments, we observed significant variations in the performance of SORL when its change of value is near the magnitude of 50. The magnitude of change in $\Delta$ can be largely attributed to the choice of $\gamma$ and $\gamma_{safe}$.
The results illustrate the mean and variance of the execution of the algorithms with independent random seeds.
\subsection{Results}
For the performance comparison of the different Safe RL algorithms, as depicted in Fig.~\ref{viol}, we report the reward performance and the failure rate of the algorithms. Based on the learning curves, we plot the pareto optimality of the algorithms based on Eq.~\ref{domin}. Our results show dominant and superior performance of SORL in both aspects in Fig.~\ref{viol}(a) and~\ref{viol}(c).
Furthermore, in Fig.~\ref{viol}(b),~\ref{viol}(e), and~\ref{viol}(g)
we can see that the proposed algorithm attains significantly better safety performance while achieving comparable returns. 
Importantly, the suboptimality of the converged policies of the comparison methods in Fig.~\ref{viol}(b) and~\ref{viol}(c)
in one or both aspects of performance can be seen. 
Finally, the results in Fig.~\ref{viol}(d) and \ref{viol}(f) illustrate SORL's consistently higher returns while also maintaining near-dominant safety performance. Thus, we observe that SORL can strike a great balance between safety and optimality offering a great novel solution for safe performance among Safe RL algorithms.
\vspace{-1.em}
\begin{figure}[htb]
\captionsetup{justification=centering}
\centerline{\includegraphics[width=.9\columnwidth,height=0.4\columnwidth]{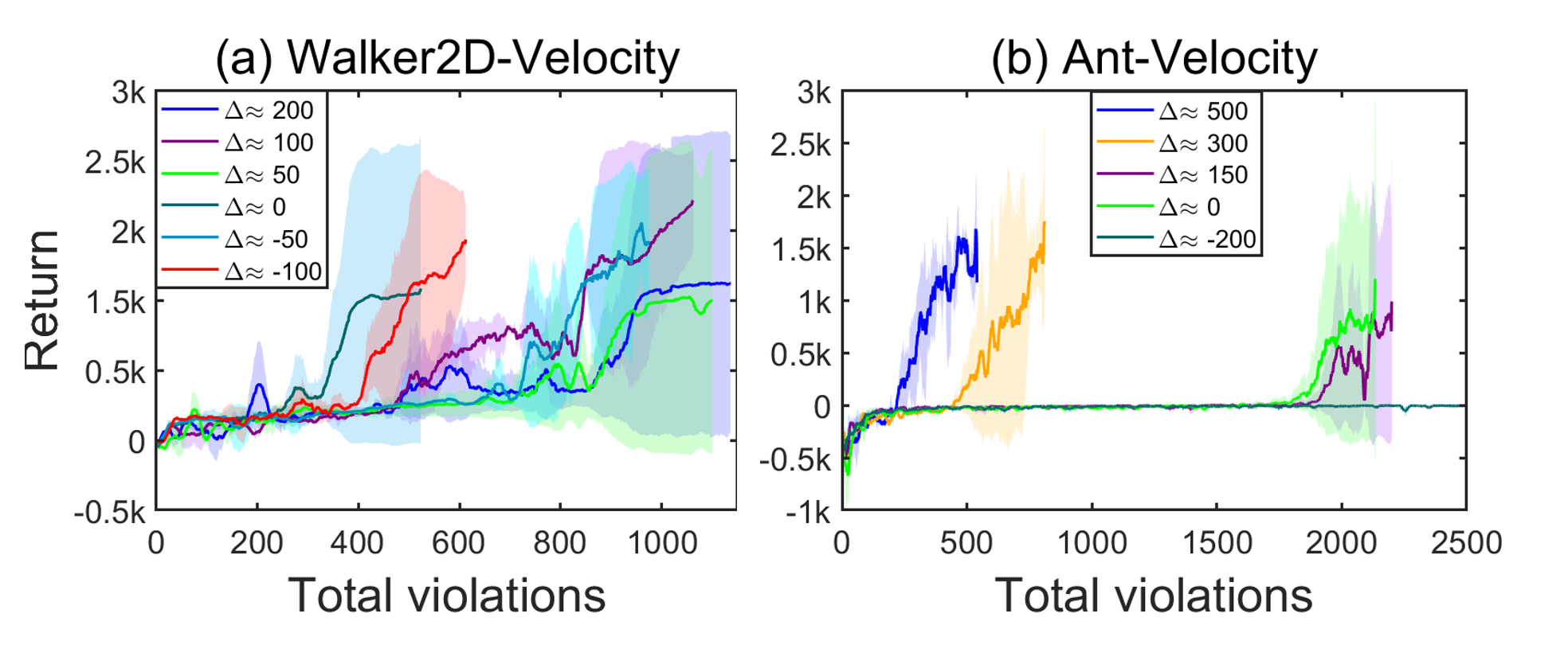}}
\caption{Undiscounted return of the policy versus the total number of violations during the training phase.
}
\label{eps}
\vspace{-1.5em}
\end{figure}
\subsection{Ablation Analysis}\label{ablation}
In this section, we study the effect of $\Delta$ on SORL's performance in two environments. We report the undiscounted return of the policy the algorithm achieves whenever a safety violation has occurred which can show sample efficiency of the algorithm in terms of safety. To gain a better understanding of the effect of $\Delta$, we also chose negative values to study its performance under too aggressive specifications. As depicted in Fig.~\ref{viol}, while being aggressive in Walker2D-Velocity helps SORL attain higher returns in lower number of constraint violations, better performance in Ant-Velocity requires more conservativeness. This may be due to the difference in the dynamics of the robots and their constraints since Ant-Circle also requires a more conservative $\Delta$ value. The dynamics differences is more evident in the comparison between $(-100,-50)$ (where there is no safety guarantee) with $\Delta\approx(500,300)$ in their respective tasks which indicates while being a little more aggressive in one helps in the reduction of the number of constraint violations, the opposite holds true for the other task.
\section{Conclusions}
This paper focuses on the problem of safe exploration and decision-making for RL agents. A novel Safe RL approach based on multi-objective policy optimization framework was proposed which optimized the policy toward optimality and safety, simultaneously. Through theoretical analysis, the safety of SORL's converged policy was guaranteed through a condition which allowed the introduction of the concept of aggressiveness. The concept provided an intuitive way to tune SORL's safety-related hyperparameter. Finally, three main safety topics (viz., system-level safety, collision avoidance, and safe manipulation) were studied through seven different tasks in total.
We evaluated reward and safety performance of the proposed algorithm against six other state-of-the-art model-free Safe RL approaches. The results showed SORL's great capability in attaining better safety performance while achieving better or comparable returns.

\bibliographystyle{IEEEtran}


\end{document}